\newcommand{\SigmoidCE}{{\mathrm{SigmoidCE}}}
\newcommand{\SoftmaxCE}{{\mathrm{SoftmaxCE}}}
\newcommand{\PairwiseLogistic}{{\mathrm{PairwiseLogistic}}}
\newcommand{\ListCE}{{\mathrm{ListCE}}}
\newcommand{\LogLoss}{{\mathrm{LogLoss}}}
\newcommand{\nsig}{\phantom{$^{\scriptscriptstyle\blacktriangle}$}\xspace}
\newcommand{\sig}{$^{\scriptscriptstyle\blacktriangle}$\xspace}
\newcommand{\sigworse}{$^{\scriptscriptstyle\blacktriangledown}$\xspace}
\DeclareMathOperator{\E}{\mathbbm{E}}
\DeclareMathOperator{\I}{\mathbbm{I}}
\newtheorem{definition}{Definition}
\newtheorem{proposition}{Proposition}
\begin{document}

\title{Regression Compatible Listwise Objectives for Calibrated Ranking with Binary Relevance}

\author{Aijun Bai}
\affiliation{%
  \institution{Google LLC}
\city{Mountain View}
 \country{USA}
  }
\email{aijunbai@google.com}

\author{Rolf Jagerman}
\affiliation{%
  \institution{Google LLC}
\city{Amsterdam}
 \country{Netherlands}
  }
\email{jagerman@google.com}

\author{Zhen Qin}
\affiliation{%
  \institution{Google LLC}
\city{Mountain View}
 \country{USA}
  }
\email{zhenqin@google.com}

\author{Le Yan}
\affiliation{%
  \institution{Google LLC}
\city{Mountain View}
 \country{USA}
  }
\email{lyyanle@google.com}

\author{Pratyush Kar}
\affiliation{%
  \institution{Google LLC}
\city{Paris}
 \country{France}
  }
\email{pratk@google.com}

\author{Bing-Rong Lin}
\affiliation{%
  \institution{Google LLC}
\city{Mountain View}
 \country{USA}
  }
\email{bingrong@google.com}

\author{Xuanhui Wang}
\affiliation{%
  \institution{Google LLC}
\city{Mountain View}
 \country{USA}
  }
\email{xuanhui@google.com}

\author{Michael Bendersky}
\affiliation{%
  \institution{Google LLC}
\city{Mountain View}
 \country{USA}
  }
\email{bemike@google.com}

\author{Marc Najork}
\affiliation{%
  \institution{Google LLC}
\city{Mountain View}
 \country{USA}
  }
\email{najork@google.com}

\renewcommand{\shortauthors}{Bai et al.}

\begin{abstract}
As Learning-to-Rank (LTR) approaches primarily seek to improve ranking quality, their output scores are not scale-calibrated by design. This fundamentally limits LTR usage in score-sensitive applications. Though a simple multi-objective approach that combines a regression and a ranking objective can effectively learn scale-calibrated scores, we argue that the two objectives are not necessarily compatible, which makes the trade-off less ideal for either of them. In this paper, we propose a practical regression compatible ranking (RCR) approach that achieves a better trade-off, where the two ranking and regression components are proved to be mutually aligned. Although the same idea applies to ranking with both binary and graded relevance, we mainly focus on binary labels in this paper. We evaluate the proposed approach on several public LTR benchmarks and show that it consistently achieves either best or competitive result in terms of both regression and ranking metrics, and significantly improves the Pareto frontiers in the context of multi-objective optimization. Furthermore, we evaluated the proposed approach on YouTube Search and found that it not only improved the ranking quality of the production pCTR model, but also brought gains to the click prediction accuracy. The proposed approach has been successfully deployed in the YouTube production system.
\end{abstract}

\begin{CCSXML}
<ccs2012>
   <concept>
       <concept_id>10002951.10003317.10003338.10003343</concept_id>
       <concept_desc>Information systems~Learning to rank</concept_desc>
       <concept_significance>500</concept_significance>
       </concept>
 </ccs2012>
\end{CCSXML}

\ccsdesc[500]{Information systems~Learning to rank}


\maketitle

\section{Introduction}
Learning-to-Rank (LTR) aims to construct a ranker from training data such that it can rank unseen objects correctly. It is therefore required that a ranker performs well on ranking metrics, such as {\em Normalized Discounted Cumulative Gain} (NDCG). It is usually the case that a ranking-centric pairwise or listwise approach, such as RankNet \cite{burges2010ranknet} or ListNet \cite{xia2008listwise}, achieves better ranking quality than a regression approach that adopts a pointwise formulation.

On the other hand, modern systems in these applications have multiple stages and downstream stages consume the predictions from previous ones. It is often desired that the ranking scores are well calibrated and the distribution remains stable.  Take the application of online advertising as an example. In online advertising, the pCTR (predicted Click-Through Rate) model is required to be well calibrated because it affects the downstream auction and pricing models~\cite{yan2022scale,ChengLi:KDD15,Calibrate:ADKDD17}, though the final ranking of ads is the one that matters most for the performance. This suggests that we want the ranker to perform well not only on ranking metrics, but also on regression metrics in terms of calibrating ranker output scores to some external scale. Popular regression metrics include Mean Squared Error (MSE) for graded relevance labels and the logistic loss (LogLoss) for binary relevance labels.

Unsurprisingly, capable ranking approaches perform poorly on regression metrics, due to the fact that their loss functions are invariant to rank-preserving score transformations, and tend to learn scores that are not scale-calibrated to regression targets. Furthermore, these approaches suffer from training instability in the sense that the learned scores may diverge indefinitely in continuous training or re-training \cite{yan2022scale}. These factors strongly limit their usage in score-sensitive applications. As a result, practitioners have no choice but to fall back on regression-only approaches even if they are suboptimal in terms of user-facing ranking metrics.

It has been shown that a standard multi-objective approach effectively learns scale-calibrated scores for ranking \cite{sculley2010combined,ChengLi:KDD15,yan2022scale,yan2023learning}. However, we argue that in this standard multi-objective setting, the regression and ranking objectives are inherently conflicting, and thus the best trade-off might not be ideal for either of them. In this paper, we propose a practical regression compatible ranking (RCR) approach where the two ranking and regression components are proved to be mutually aligned. Although the same idea applies to ranking with both binary and graded relevance, we mainly focus on binary labels in this paper. Empirically, we conduct our experiments on several public LTR datasets, and show that the proposed approach achieves either the best or competitive result in terms of both regression and ranking metrics, and significantly improves the Pareto frontiers in the context of multi-objective optimization. Furthermore, we evaluated the proposed approach on YouTube Search and found it not only improved the ranking capability of the production pCTR model but also brought gains to the click prediction accuracy. The proposed approach has been fully deployed in the YouTube production system.

\section{Related Work} \label{sec:relatedwork}

There is a long history of the study of Learning-to-Rank (LTR) \cite{liu2009learning}. The general set up is that a scoring function is trained to score and sort a list of objects given a context. The accuracy is evaluated based on ranking metrics that only care about the order of the objects, but not the scale of the scores. Existing works include designing more effective loss function \cite{cao2007learning, zhu2020listwise, jagerman2022optimizing,xia2008listwise}, learning from biased interaction data \cite{wang2016learning,joachims2017unbiased,jagerman2019model,revisit}, and different underlying models, from support vector machines \cite{Joachims:2002}, to gradient boosted decision trees \cite{ke2017lightgbm, wang2018lambdaloss}, to neural networks \cite{pang2020setrank, pasumarthi2019tf, lambdarank, matching, ranknet2005icml, dasalc}. However, almost none of the existing works have studied the calibration issue of the ranking model outputs, which limits their applicability in many applications where a calibrated output is necessary.

To the best of our knowledge, few works have studied the ranking calibration problem. Similar to classification problems, post-processing methods can be used for calibrating ranking model outputs. For example, \citeauthor{CTR:LTR:ADKDD13} \cite{CTR:LTR:ADKDD13} used the pairwise squared hinge loss to train an LTR model for ads ranking, and then used Platt-scaling \cite{PlattScaling:1999} to convert the ranking scores into probabilities. Recently, \citeauthor{Calibrate:ADKDD17} \cite{Calibrate:ADKDD17} compared different post-processing methods to calibrate the outputs of an ordinal regression model, including Platt-scaling and isotonic regression. Our proposed method does not rely on a post-processing step.

Another class of approaches is based on multi-objective setting where ranking loss is calibrated by a regression loss during the training time, without an additional post-processing step. \citeauthor{sculley2010combined} \cite{sculley2010combined} is an early work that combines regression and ranking. It has been used in concrete application \cite{yan2022scale,ChengLi:KDD15}. In particular, \citeauthor{yan2022scale} \cite{yan2022scale} used the multi-objective formulation in deep models to prevent models from diverging during training and achieve output calibration at the same time. The shortcoming of such an approach is that ranking accuracy can be traded for calibration because the two objectives are not designed to be compatible. Our proposed method does not sacrifice ranking accuracy to achieve calibration.

\section{Background} \label{sec:background}

Learning-to-Rank (LTR) concerns the problem of learning a model to rank a list of objects given a context. Throughout the paper, we use {\em query} to represent the context and {\em documents} to represent the objects. In the so-called {\em score-and-sort} setting, a ranker is learned to score each document, and a ranked list is formed by sorting documents according to the scores.

More formally, let $q \in Q$ be a query and $x \in \mathcal{X}$ be a document, a score function is defined as $s(q, x ; \bm{\theta}): Q \times \mathcal{X} \rightarrow \mathbb{R}$, where $Q$ is the query space, $\mathcal{X}$ is the document space, and $\bm{\theta}$ is the parameters of the score function $s$. A typical LTR dataset $D$ consists of examples represented as tuples $(q, x, y) \in D$ where $q$, $x$ and $y$ are query, document and label respectively. Let $\mathbf{q} = \{ q | (q, x, y) \in D \}$ be the query set induced by $D$. Let $\mathcal{L}_\mathit{query}(\bm{\theta}; q)$ be the loss function associated with a single query $q \in Q$. Depending on how $\mathcal{L}_\mathit{query}$ is defined, LTR techniques can be roughly divided into three categories: pointwise, pairwise and listwise.

In the pointwise approach, the query loss $\mathcal{L}_\mathit{query}$ is represented as sum of losses over documents sharing the same query. For example, in logistic-regression ranking (i.e. ranking with binary relevance labels), the Sigmoid Cross Entropy loss per document (denoted by SigmoidCE) is defined as:
\begin{equation} \label{eq:sigmoidce}
    \SigmoidCE(s, y) = -y \log \sigma(s) - (1 - y) \log(1 - \sigma(s)),
\end{equation}
where $s = s(q, x ; \bm{\theta})$ is the predicted score of query-document pair $(q, x)$ and $\sigma(s) = (1 + \exp(-s))^{-1}$ is the sigmoid function. SigmoidCE is shown to be scale-calibrated \cite{yan2022scale} in the sense that it achieves global minima when
$
    \sigma(s) \to \E[y | q, x]
$.

In the pairwise approach, the query loss $\mathcal{L}_\mathit{query}$ is represented as sum of losses over all document-document pairs sharing the same query. The fundamental RankNet approach uses a pairwise Logistic loss (denoted by PairwiseLogistic) \cite{burges2010ranknet}:
\begin{equation}
    \PairwiseLogistic(s_1, s_2, y_1, y_2) = -\I(y_2 > y_1) \log \sigma(s_2 - s_1),
\end{equation}
where $s_1$ and $s_2 $ are the predicted scores for documents $x_1$ and $x_2$, $\I$ is the indicator function, and $\sigma$ is the sigmoid function. PairwiseLogistic achieves global minima when
$
    \sigma(s_2 - s_1) \to \E[\I(y_2 > y_1) | q, x_1, x_2]
$,
which indicates that the loss function mainly considers the pairwise score differences, which is also known as the \textit{translation-invariant} property~\cite{yan2022scale}.

In the listwise approach, the query loss $\mathcal{L}_\mathit{query}$ is attributed to the whole list of documents sharing the same query. The popular ListNet approach uses the Softmax based Cross Entropy loss (denoted by SoftmaxCE) to represent the listwise loss as \cite{xia2008listwise}:
\begin{equation}
        \SoftmaxCE(s_{1:N}, y_{1:N}) = - {1 \over C} \sum_{i=1}^{N} y_i \log {\exp(s_i) \over \sum_{j=1}^{N} \exp(s_j)},
\end{equation}
where $N$ is the list size, $s_i$ is the predicted score, and $C = \sum_{j=1}^{N}y_j$. The global minima will be achieved at \cite{xia2008listwise}:
\begin{equation}
    {\exp(s_i) \over \sum_{j=1}^{N} \exp(s_j) } \to { \E[y_i | q, x_i] \over \sum_{j=1}^{N} \E[y_j| q, x_j] }.
\end{equation}
Similar to PairwiseLogistic, the SoftmaxCE loss is translation-invariant, and could give scores that are arbitrarily worse with respect to regression metrics.

\section{Regression Compatible Ranking} \label{sec:main}

In this section, we first give the motivation, then formally propose the approach to regression compatible ranking (RCR).

\subsection{Motivation} \label{subsec:motivation}

It has been shown in the literature that a standard multi-objective approach effectively learns scale-calibrated scores for ranking \cite{sculley2010combined,ChengLi:KDD15,yan2022scale}. Taking logistic-regression ranking as an example, \citeauthor{yan2022scale} define the multi-objective loss as a weighted sum of SigmoidCE and SoftmaxCE losses:
\begin{multline}
    \mathcal{L}_\mathit{query}^\mathit{MultiObj}(\bm{\theta}; q) = (1 - \alpha) \cdot \sum_{i=1}^{N} \SigmoidCE(s_i, y_i) \\
    + \alpha \cdot \SoftmaxCE(s_{1:N}, y_{1:N}),
\end{multline}
where $\alpha \in [0, 1]$ is the trade-off weight. For simplicity, we refer to this method as $\SigmoidCE + \SoftmaxCE$. It can be seen that $\SigmoidCE + \SoftmaxCE$ is no longer translation-invariant, and has been shown effective for calibrated ranking. Let's take a deeper look at what scores are  learned following this simple multi-objective formalization.

Given query $q$, let $P_i = \E[y_i | q, x_i]$ be the ground truth click probability further conditioned on document $x_i$. Recall that, SigmoidCE achieves global minima when $\sigma(s_i) \to P_i$, which means we have the following pointwise learning objective for SigmoidCE:
\begin{equation} \label{eq:sigmoid_target}
    s_i \to \log P_i  - \log(1 - P_i).
\end{equation}

On the other hand, SoftmaxCE achieves global minima when
\begin{equation}
    {\exp(s_i) \over \sum_{j=1}^{N} \exp(s_j) } \to { P_{i} \over \sum_{j=1}^{N} P_{j} },
\end{equation}
or equivalently:
\begin{equation} \label{eq:softmax_target}
    s_i \to \log P_i - \log \sum_{j=1}^{N} P_{j} + \log \sum_{j=1}^{N} \exp(s_j),
\end{equation}
where the $\log$-$\sum$-$\exp$ term is an unknown constant and has no effects on the value or gradients of the final SoftmaxCE loss.

In the context of stochastic gradient descent, Equations \eqref{eq:sigmoid_target} and \eqref{eq:softmax_target} indicate that the gradients generated from the SigmoidCE and SoftmaxCE components are \textit{pushing the scores to significantly different targets}. This reveals the fact that the two losses in a standard multi-objective setting are inherently conflicting and will fail to find a solution ideal for both. How can we resolve this conflict?

Noticing that since $\sigma(s_i)$ is pointwisely approaching to $P_i$, if we replace the ground truth probabilities $P_i$ on the right side of Equation \eqref{eq:softmax_target} with the empirical approximations $\sigma(s_i)$ and drop the constant term, we are constructing some virtual logits:
\begin{equation} \label{eq:constructed_logits}
  s'_i \gets \log \sigma(s_i) - \log \sum_{j=1}^N \sigma(s_j).
\end{equation}

If we further apply SoftmaxCE loss on the new logits $s'_i$, we are establishing the following novel listwise learning objective:
\begin{equation}
  {\exp(s'_i) \over \sum_{j=1}^{N} \exp(s'_j) } \to {P_i \over \sum_{j=1}^N P_j},
\end{equation}
which is equivalent to
\begin{equation}  \label{eq:compound}
  {\sigma(s_i) \over \sum_{j=1}^N \sigma(s_j) } \to {P_i \over \sum_{j=1}^N P_j}.
\end{equation}

It is easy to see that Equation \eqref{eq:sigmoid_target} implies Equation \eqref{eq:compound} automatically, which means, as pointwise regression and listwise ranking objectives, they are well aligned in the sense that they achieve global minima simultaneously.

\subsection{The Main Approach}

Inspired by the above motivating example, we firstly define a novel Listwise Cross Entropy loss (ListCE) as follows.
\begin{definition}
Let $N$ be the list size, $s_{1:N}$ be the predicted scores, and $y_{1:N}$ be the labels. Let $T(s): \mathbb{R} \rightarrow \mathbb{R}^+$ be a non-decreasing transformation on scores. The Listwise Cross Entropy loss with transformation $T$ is defined as:
\begin{equation} \label{eq:listce}
\ListCE(T, s_{1:N}, y_{1:N}) = - {1 \over C} \sum_{i=1}^{N} y_i \log {T(s_i) \over \sum_{j=1}^{N} T(s_j)},
\end{equation}
where $C = \sum_{j=1}^{N}y_j$ is a normalizing factor.
\end{definition}

For the scope of this paper, we interchangeably use ListCE with transformation $T$, $\ListCE(T)$, or even ListCE when there is no ambiguity. We immediately have the following propositions.
\begin{proposition}$\ListCE(\exp)$ reduces to $\SoftmaxCE$.
\end{proposition}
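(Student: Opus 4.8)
The plan is to prove the proposition by direct substitution into the definition. First I would check that $\exp$ is an admissible transformation in the sense required by the definition of $\ListCE$: the map $\exp\colon \mathbb{R}\to\mathbb{R}^+$ is strictly positive-valued and (strictly, hence in particular) non-decreasing, so $T=\exp$ is a legitimate choice and the quantity $\ListCE(\exp, s_{1:N}, y_{1:N})$ is well defined for every score vector $s_{1:N}$ and label vector $y_{1:N}$ with $C=\sum_{j=1}^N y_j > 0$.

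Next I would substitute $T(s)=\exp(s)$ into Equation~\eqref{eq:listce}. The normalizing factor $C=\sum_{j=1}^{N} y_j$ does not involve $T$, so it is unaffected. Replacing $T(s_i)$ by $\exp(s_i)$ in the numerator and $\sum_{j=1}^N T(s_j)$ by $\sum_{j=1}^N \exp(s_j)$ in the denominator turns the generic summand $y_i \log \frac{T(s_i)}{\sum_{j=1}^N T(s_j)}$ into $y_i \log \frac{\exp(s_i)}{\sum_{j=1}^N \exp(s_j)}$. Summing over $i$ and multiplying by $-1/C$ yields exactly the right-hand side of the definition of $\SoftmaxCE$, so the two loss functions agree pointwise as functions of $(s_{1:N}, y_{1:N})$.

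I do not anticipate any real obstacle here: the statement is an immediate consequence of comparing the two definitions term by term, and the only non-vacuous point is confirming that $\exp$ satisfies the structural hypothesis imposed on transformations (a non-decreasing map $\mathbb{R}\to\mathbb{R}^+$), which it plainly does. If desired, I would add a one-line remark that this identification is what motivates viewing $\ListCE$ as a strict generalization of $\SoftmaxCE$, with the transformation $T$ as the new degree of freedom to be exploited in the subsequent construction.
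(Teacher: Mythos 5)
Your direct-substitution argument is correct and is exactly what the paper intends, since it states this proposition as an immediate consequence of the definitions without further proof. Setting $T=\exp$ in Equation~\eqref{eq:listce} literally reproduces the $\SoftmaxCE$ formula, and your check that $\exp$ satisfies the non-decreasing, positive-valued hypothesis is the only substantive point.
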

\begin{proposition} \label{proposition:list_ce}
$\ListCE(T)$ achieves global minima when
\begin{equation}
    {T(s_i) \over \sum_{j=1}^{N} T(s_j) } \to { \E[y_i | q, x_i] \over \sum_{j=1}^{N} \E[y_j| q, x_j] }.
\end{equation}
\end{proposition}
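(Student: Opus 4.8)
The plan is to adapt the classical argument that identifies the minimiser of a cross-entropy objective to the reweighting induced by a monotone transformation $T$. Since $\ListCE(\exp)$ reduces to $\SoftmaxCE$ (Proposition~1), this is a direct generalisation of the $\SoftmaxCE$ characterisation recalled in Section~\ref{sec:background}, and the same proof template should go through with $\exp$ replaced by $T$.

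First I would fix a query $q$ with its documents $x_{1:N}$ and work with the population (expected) loss $\E[\ListCE(T, s_{1:N}, y_{1:N}) \mid q, x_{1:N}]$, assuming as usual that every query has at least one relevant document, so $C = \sum_j y_j > 0$ almost surely and the loss is well defined. Set $p_i = T(s_i)/\sum_{j} T(s_j)$; because $T$ takes values in $\mathbb{R}^+$ this is a genuine probability vector on $\{1,\dots,N\}$. As the $p_i$ do not depend on the labels, linearity of expectation gives
\begin{equation}
  \E[\ListCE \mid q, x_{1:N}] = -\sum_{i=1}^N w_i \log p_i, \qquad w_i := \E[y_i/C \mid q, x_{1:N}],
\end{equation}
and since $\sum_i y_i = C$ we get $\sum_i w_i = 1$, so $w = (w_1,\dots,w_N)$ is itself a distribution.

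The core step is Gibbs' inequality (equivalently, non-negativity of the KL divergence): $-\sum_i w_i \log p_i = H(w) + \mathrm{KL}(w \,\|\, p) \ge H(w)$, with equality if and only if $p_i = w_i$ for all $i$. Hence the expected loss is minimised exactly when $T(s_i)/\sum_j T(s_j) = w_i$. That this infimum is actually \emph{attained} requires the target to be realisable as a normalised vector of $T$-values; since only the ratios of the $T(s_i)$ matter, this holds by scaling for the transformations of interest (e.g.\ $T = \exp$, which recovers $\SoftmaxCE$, or $T = \sigma$), and it also makes the characterisation necessary rather than merely sufficient. Only positivity of $T$ enters here; monotonicity is what makes $\ListCE$ a sensible ranking loss.

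The genuinely delicate point --- and the step I expect to be the main obstacle --- is matching $w_i = \E[y_i/C \mid q, x_{1:N}]$ with the stated target $\E[y_i \mid q, x_i]/\sum_j \E[y_j \mid q, x_j]$. The two coincide once $C$ is deterministic given the query: in the population (soft-label) regime, where $y_i$ is identified with $P_i := \E[y_i \mid q, x_i]$ and labels depend only on their own document, $C = \sum_j P_j$ is constant, so $w_i = P_i/\sum_j P_j$ exactly. With a random $C$ and hard labels the true minimiser is $\E[y_i/C]$, which differs from $\E[y_i]/\E[C]$ in general, so the argument should either invoke a fixed-$C$ assumption or state the conclusion in the population limit --- I would take the latter, to stay faithful to the analogous $\SoftmaxCE$ claim.
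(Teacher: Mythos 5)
Your proof is correct, and it takes a genuinely different route from the paper's. The paper first replaces the labels $y_i$ by their expectations $\overline{y}_i = \E[y_i \mid q, x_i]$, then finds the constrained extremum of $\sum_i \overline{y}_i \log p_i$ over the simplex via a Lagrange-multiplier computation, obtaining the stationary point $p_i = \overline{y}_i / \sum_j \overline{y}_j$ and then asserting that this extremum is the global minimum. You instead invoke Gibbs' inequality, writing the expected loss as $H(w) + \mathrm{KL}(w \,\|\, p)$; this buys you the global-minimum and uniqueness claims in one stroke, whereas the Lagrangian route only certifies a stationary point and leaves the second-order/global verification as an (unproven) remark. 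The other thing your write-up adds is the explicit treatment of the random normalizer $C = \sum_j y_j$: the paper's opening step --- that applying the loss to $(x, y)$ is "equivalent in expectation" to applying it to $(x, \overline{y})$ --- silently commutes the expectation past the $1/C$ factor, which is exactly the $\E[y_i/C] \neq \E[y_i]/\E[C]$ issue you flag. Your resolution (state the result in the population/soft-label regime where $C$ is deterministic, matching how the analogous $\SoftmaxCE$ claim is read) is the right fix and makes precise a step the paper takes for granted. Your remark on attainability of the target as a normalised vector of $T$-values is also a point the paper does not address.
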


\begin{proof}
Let $\overline{y} = \E[y | q, x]$ be the expected label of query-document pair $(q, x)$. Applying the ListCE loss on $(x, y) \in D$ is then equivalent to applying it on $(x, \overline{y})$ in expectation. Given transformation $T$, and predicted scores $s_{1:N}$, with $p_i = T(s_i) / \sum_{j=1}^{N} T(s_j)$, we have:
\begin{equation} \label{eq:listce_simplified}
\ListCE(T, s_{1:N}, \overline{y}_{1:N}) =  {1 \over \sum_{j=1}^{N} \overline{y}_j} \sum_{i=1}^{N} \overline{y}_i \log p_i,
\end{equation}
subject to
$\sum_{i=1}^{N} p_i = 1$.

Let's construct the following Lagrangian formalization:
\begin{equation} \label{eq:lagrangian}
    \mathcal{L}(p_{1:N}, \lambda) = {1 \over \sum_{j=1}^{N} \overline{y}_j} \sum_{i=1}^{N} \overline{y}_i \log p_i + \lambda (\sum_{i=1}^{N} p_i  1).
\end{equation}

Finding the extremum value of Equation \eqref{eq:listce_simplified} is then equivalent to finding the stationary points of Equation \eqref{eq:lagrangian}, which requires:
\begin{equation} \label{eq:partial1}
    {\partial{\mathcal{L}(p_{1:N}, \lambda)} \over \partial{p_i} } = {\overline{y}_i \over p_i \sum_{j=1}^{N} \overline{y}_j } + \lambda = 0,
\end{equation}
and
\begin{equation} \label{eq:partial2}
    {\partial{\mathcal{L}(p_{1:N}, \lambda)} \over \partial{\lambda} } = \sum_{i=1}^{N} p_i  1 = 0.
\end{equation}

Note that Equations \eqref{eq:partial1} and \eqref{eq:partial2} give us a system of $N+1$ equations on $N+1$ unknowns. It is easy to see that the unique solution is
\begin{equation}
p_i = {\overline{y}_i \over \sum_{j=1}^{N} \overline{y}_j},
\end{equation}
and $\lambda = 1$.

This indicates the unique global extremum at
\begin{equation}
    {T(s_i) \over \sum_{j=1}^{N} T(s_j) } \to  {\E[y_i | q, x_i] \over \sum_{j=1}^{N} \E[y_j | q, x_j]}.
\end{equation}

It is easy to verify that this unique global extremum attributes to the global minima which concludes the proof.
\end{proof}

In logistic-regression ranking, all labels are binarized or within the range of [0, 1]. A natural pointwise objective is the SigmoidCE loss. With SigmoidCE as the pointwise component, it is then required to use the sigmoid function as the transformation such that they can be optimized simultaneously without conflict.
\begin{definition}
The Regression Compatible Ranking (RCR) loss for a single query in a logistic-regression ranking task (i.e. ranking with binary relevance labels) is defined as:
\begin{multline} \label{eq:rcr_sigmoid}
    \mathcal{L}_\mathit{query}^\mathit{Compatible}(\bm{\theta}; q) = (1 - \alpha) \cdot \sum_{i=1}^{N} \SigmoidCE(s_i, y_i) \\
    + \alpha \cdot \ListCE(\sigma, s_{1:N}, y_{1:N}),
\end{multline}
where $\sigma$ is the sigmoid function.
\end{definition}

For simplicity, we refer to this method as $\SigmoidCE + \ListCE(\sigma)$. We have the following proposition:
\begin{proposition} \label{proposition:rcr_sigmoid}
$\SigmoidCE + \ListCE(\sigma)$ achieves global minima when $\sigma(s_i) \to \E[y_i | q, x_i]$.
\end{proposition}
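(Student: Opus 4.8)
The plan is to exploit the fact that $\mathcal{L}_\mathit{query}^\mathit{Compatible}$ is a nonnegatively weighted sum of two losses whose global minima we already control, and to show that those minima are attained \emph{simultaneously} at $\sigma(s_i)\to P_i$, where $P_i=\E[y_i\mid q,x_i]$. First I would pass to expectations exactly as in the proof of Proposition~\ref{proposition:list_ce}: since both $\SigmoidCE$ and $\ListCE(\sigma)$ are linear in the labels, minimizing in expectation over $(q,x,y)\in D$ is equivalent to replacing each $y_i$ by $\overline{y}_i=P_i$, so it suffices to analyze $\mathcal{L}_\mathit{query}^\mathit{Compatible}$ with labels $\overline{y}_{1:N}$.

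Next I would establish a termwise lower bound. Since $1-\alpha\ge 0$ and $\alpha\ge 0$, for all scores
\begin{equation}
\mathcal{L}_\mathit{query}^\mathit{Compatible}(\bm\theta;q)\;\ge\;(1-\alpha)\inf_{s_{1:N}}\sum_{i=1}^{N}\SigmoidCE(s_i,\overline{y}_i)\;+\;\alpha\inf_{s_{1:N}}\ListCE(\sigma,s_{1:N},\overline{y}_{1:N}).
\end{equation}
The first infimum splits into independent per-document terms; writing $p_i=\sigma(s_i)\in(0,1)$, each is the binary cross entropy between $\overline{y}_i$ and $p_i$, which is convex in $p_i$ and uniquely minimized at $p_i=\overline{y}_i$ when $\overline{y}_i\in(0,1)$ (and has infimum $0$, approached as $s_i\to\pm\infty$, when $\overline{y}_i\in\{0,1\}$). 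The second infimum is characterized by Proposition~\ref{proposition:list_ce}: it is attained when $\sigma(s_i)/\sum_{j=1}^{N}\sigma(s_j)\to P_i/\sum_{j=1}^{N}P_j$.

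The crux — and essentially the only content — is that these two conditions are compatible: if $\sigma(s_i)\to P_i$ for every $i$, then trivially $\sigma(s_i)/\sum_{j=1}^{N}\sigma(s_j)\to P_i/\sum_{j=1}^{N}P_j$, so the $\ListCE(\sigma)$ term reaches its infimum at the same point the $\SigmoidCE$ term does. Hence at $\sigma(s_i)\to P_i$ the right-hand side of the displayed inequality is attained, so $\sigma(s_i)\to P_i$ is a global minimizer of $\mathcal{L}_\mathit{query}^\mathit{Compatible}$. As a bonus I would note that for $\alpha<1$ this minimizer is the only one: attaining the bound forces the strictly convex $\SigmoidCE$ term to its minimum, which pins $\sigma(s_i)\to P_i$ and removes the scale ambiguity that $\ListCE(\sigma)$ alone would leave; the case $\alpha=1$ is just Proposition~\ref{proposition:list_ce}.

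The main obstacle is conceptual rather than computational: one has to make the ``simultaneous minimization'' step rigorous by checking that a joint minimizer of the two components genuinely lies in the feasible set — which is precisely the observation from Section~\ref{subsec:motivation} that Equation~\eqref{eq:sigmoid_target} implies Equation~\eqref{eq:compound} — and one must handle the boundary labels $P_i\in\{0,1\}$ carefully, where minima are only approached in the limit, which is why the statement is phrased with ``$\to$'' throughout.
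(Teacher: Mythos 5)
Your proof is correct and takes essentially the same route as the paper's: both rest on the single observation that the $\SigmoidCE$ minimizer $\sigma(s_i) \to \E[y_i \mid q, x_i]$ automatically satisfies the $\ListCE(\sigma)$ optimality condition of Proposition~\ref{proposition:list_ce}, so the two nonnegatively weighted components attain their infima simultaneously at the same scores. Your additional care---the lower bound by the sum of infima, the boundary cases $P_i \in \{0,1\}$, and the uniqueness remark for $\alpha < 1$---makes the argument more rigorous than the paper's two-line version but does not change the approach.
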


\begin{proof}
The $\SigmoidCE$ component achieves global minima when $\sigma(s_i) \to \E[y_i | q, x_i]$ which implies
\begin{equation}
    {\sigma(s_i) \over \sum_{j=1}^{N} \sigma(s_j) } \to  {\E[y_i | q, x_i] \over \sum_{j=1}^{N} \E[y_j | q, x_j]},
\end{equation}
which minimizes $\ListCE(\sigma)$ at its global minima.
\end{proof}

\section{Experiments on Public Datasets} \label{sec:experiments}

To validate the proposed approach, we conduct our experiments on several public LTR datasets in this section.

\begin{table*}[ht]
\centering
\caption{Comparisons on logistic-regression ranking tasks. Model selection is done on validation sets with test set results reported. Numbers with bold font indicate the best result. \sig and \sigworse indicate statistical significance with p-value=0.05 of better and worse results than the {SoftmaxCE-Platt} baseline.}
\label{tab:binarized}
\begin{tabular}{l||r|r|r||r|r|r||r|r|r}
\hline
Datasets & \multicolumn{3}{c||}{Web30K} & \multicolumn{3}{c||}{Yahoo} & \multicolumn{3}{c}{Istella} \\ \hline
Metrics & NDCG@10 & LogLoss & ECE & NDCG@10 & LogLoss & ECE & NDCG@10 & LogLoss & ECE \\ \hline \hline
SigmoidCE & 0.4626\sig & \textbf{0.5996}\sig & \textbf{0.1216}\sig & 0.6852\sigworse & 0.4296\sig & 0.1807\sig & 0.6560\sigworse & \textbf{0.0612}\sig & 0.0275\sigworse \\ \hline
ListCE($\sigma$) & 0.4528\sigworse & 1.1675\sigworse & 0.1657\sigworse & 0.6954\sigworse & 0.7652\sigworse & 0.1475\sig & 0.6862\sig & 0.0643\sigworse & 0.0248\nsig \\
SoftmaxCE & 0.4578\nsig & 23.7719\sigworse & 0.5503\sigworse & 0.6993\nsig & 15.9964\sigworse & 0.2472\sigworse & 0.6839\nsig & 60.5913\sigworse & 0.9556\sigworse \\
{SoftmaxCE-Platt} & {0.4578}\nsig & {0.6103}\nsig & {0.1333}\nsig & {0.6993}\nsig & {0.5036}\nsig & {0.1962}\nsig & {0.6839}\nsig & {0.0628}\nsig & {0.0246}\nsig \\ \hline
SigmoidCE + SoftmaxCE & 0.4665\sig & 0.6239\sigworse & 0.1509\sigworse & 0.7008\sig & 0.4626\sig & 0.1852\sig & 0.6861\sig & 0.0643\sigworse & 0.0271\sigworse \\
\textbf{SigmoidCE + ListCE($\sigma$)} & \textbf{0.4680}\sig & 0.6031\sig & 0.1275\sig & \textbf{0.7050}\sig & \textbf{0.4187}\sig & \textbf{0.1550}\sig & \textbf{0.6900}\sig & 0.0634\nsig & \textbf{0.0242}\nsig \\ \hline
\end{tabular}
\end{table*}

\subsection{Experiment Setup}
\subsubsection{Datasets}

We extensively compare our methods with baselines on three datasets: Web30K \cite{qin2013introducing}, Yahoo \cite{chapelle2011yahoo}, and Istella \cite{dato2016fast}. These datasets have graded relevance labels. To study logistic-regression ranking, we simply binarize them by treating non-zero labels as 1s.

\textbf{Web30K} is a public dataset where the 31531 queries are split into training, validation, and test partitions with 18919, 6306, and 6306 queries respectively. There are on average about 119 candidate documents associated with each query. Each document is represented by 136 numerical features and graded with a 5-level relevance label. The percentages of documents with relevance label equal to 0, 1, 2, 3, and 4 are about 51.4\%, 32.5\%, 13.4\%, 1.9\%, and 0.8\%. When being binarized, the percentages for 0 and 1 are 51.4\% and 48.6\%.

\textbf{Yahoo} LTR challenge dataset consists of 29921 queries, with 19944, 2994 and 6983 queries for training, validation, and test respectively. There are 700 numerical features extracted for each query-document pair. The average number of documents per query is 24, but some queries have more than 100 documents. The labels are numerically graded. The distribution over 0, 1, 2, 3, and 4 is 21.9\%, 50.2\%, 22.3\%, 3.9\%, and 1.7\%. In binarized form, the distribution over 0 and 1 is 21.9\% and 78.1\%.

\textbf{Istella} LETOR dataset is composed of 33018 queries, with 20901, 2318, and 9799 queries respectively in training, validation, and test partitions. The candidate list to each query is with on average 316 documents, and each document is represented by 220 numerical features. The graded relevance labels also vary from 0 to 4 but with a more skewed distribution: 96.3\% for 0s, 0.8\% for 1s, 1.3\% for 2s, 0.9\% for 3s, and 0.7\% for 4s. With binarization, this distribution becomes 96.3\% for 0s and 3.7\% for 1s.

\subsubsection{Metrics}
In our experiments, we are interested in both regression and ranking performance. For ranking performance, we adopt the popular NDCG@10 \cite{jarvelin2002cumulated} as the main metric. More formally, given a list of labels $y_{1:N}$ and a list of output scores $s_{1:N}$, $\NDCG@k$ is defined as:
\begin{equation}
    \NDCG@k(s_{1:N}, y_{1:N}) = {\DCG@k(s_{1:N}, y_{1:N}) \over \DCG@k(y_{1:N}, y_{1:N})},
\end{equation}
where $\DCG@k$ is the so-called \textit{Discounted Cumulative Gain} up to position $k$ metric defined as:
\begin{equation}
    \DCG@k(s_{1:N}, y_{1:N}) = \sum_{i=1}^{N} \I(\pi(s_i) \leq k) { 2^{y_i} - 1 \over \log_2(\pi(s_i) + 1) },
\end{equation}
where $\pi(s_i)$ is the 1-based rank of score $s_i$ in the descendingly sorted list of $s_{i:N}$.

For regression performance, we mainly look at the LogLoss metric, which is defined as:
\begin{equation}
    \LogLoss(\hat{y}_{1:N}, y_{1:N}) = -{1 \over N} \sum_{i=1}^{N} y_i \log \hat{y}_i + (1 - y_i) \log (1 - \hat{y}_i),
\end{equation}
and
where $N$ is the total data size, $y_i$ is the label and $\hat{y}_i$ is the predicted label. Note that for LogLoss, $\hat{y}_i = \sigma(s_i)$ is the predicted probability after sigmoid transformation.

In addition, we consider the Expected Calibration Error (ECE) \cite{naeini2015obtaining,guo2017calibration} as a universal metric for calibration. This metric is commonly used in uncertainty calibration. Following \cite{yan2022scale}, we divide ranking documents in each query into $M$ bins after we sort them by the model predictions, and compute the ECE by,
\begin{equation}
    ECE = {1 \over |Q|} \sum_{q \in Q} \sum_{m=1}^{M} {|B_m| \over |D_q|} \left | {1 \over |B_m|} \sum_{i \in B_m} y_i - {1 \over |B_m|} \sum_{i \in B_m}  \hat{y}_i \right|.
\end{equation}
In this work, we use $M$ = 10 bins with each bin containing approximately the same number of documents with successive predictions.

\subsubsection{Methods}
We mainly compare the proposed $\SigmoidCE + \ListCE(\sigma)$ method with SigmoidCE and SigmoidCE + SoftmaxCE. Additionally, we include ListCE($\sigma$) in the comparison. We also compare with SoftmaxCE and SoftmaxCE-Platt, where the SoftmaxCE-Platt method applies the de facto Platt-scaling after a model that has been trained with SoftmaxCE.

We conduct our experiments using the TF-Ranking library \cite{pasumarthi2019tf}. In all experiments, we fix the ranker architecture to be a 3-layer Dense Neural Network (DNN) whose hidden layer dimensions are 1024, 512 and 256. The fraction of neuron units dropped out in training is set to be 0.5. We run the experiments on GPUs and use 128 as the training batch size. We use Adam \cite{kingma2014adam} as the optimizer, perform an extensive grid search of learning rates (LRs) and $\alpha$ over [0.01, 0.001] $\times$ [0.001, 0.005, 0.01, 0.05, 0.1, 0.25, 0.5, 0.75, 0.9, 0.95, 0.99, 0.995, 0.999] for 100 epochs. For each method, model selection is done on validation sets with test sets results reported. For pointwise regression SigmoidCE, we select a model by its regression performance on LogLoss, as this is what they are optimized for, and we are interested to see what the best regression performance they can achieve; for listwise, Multi-Objective and RCR methods, we select a model by its ranking performance on NDCG@10.

To further study the behavior of the approaches in the context of multi-objective optimization, we evaluate all models on the test data, and plot the Pareto frontier for each model in comparison.

\subsection{Experimental Results}

\subsubsection{Main Comparisons}

\begin{figure*}[ht]
     \centering
     \begin{subfigure}[b]{0.33\textwidth}
         \centering
         \includegraphics[width=\textwidth]{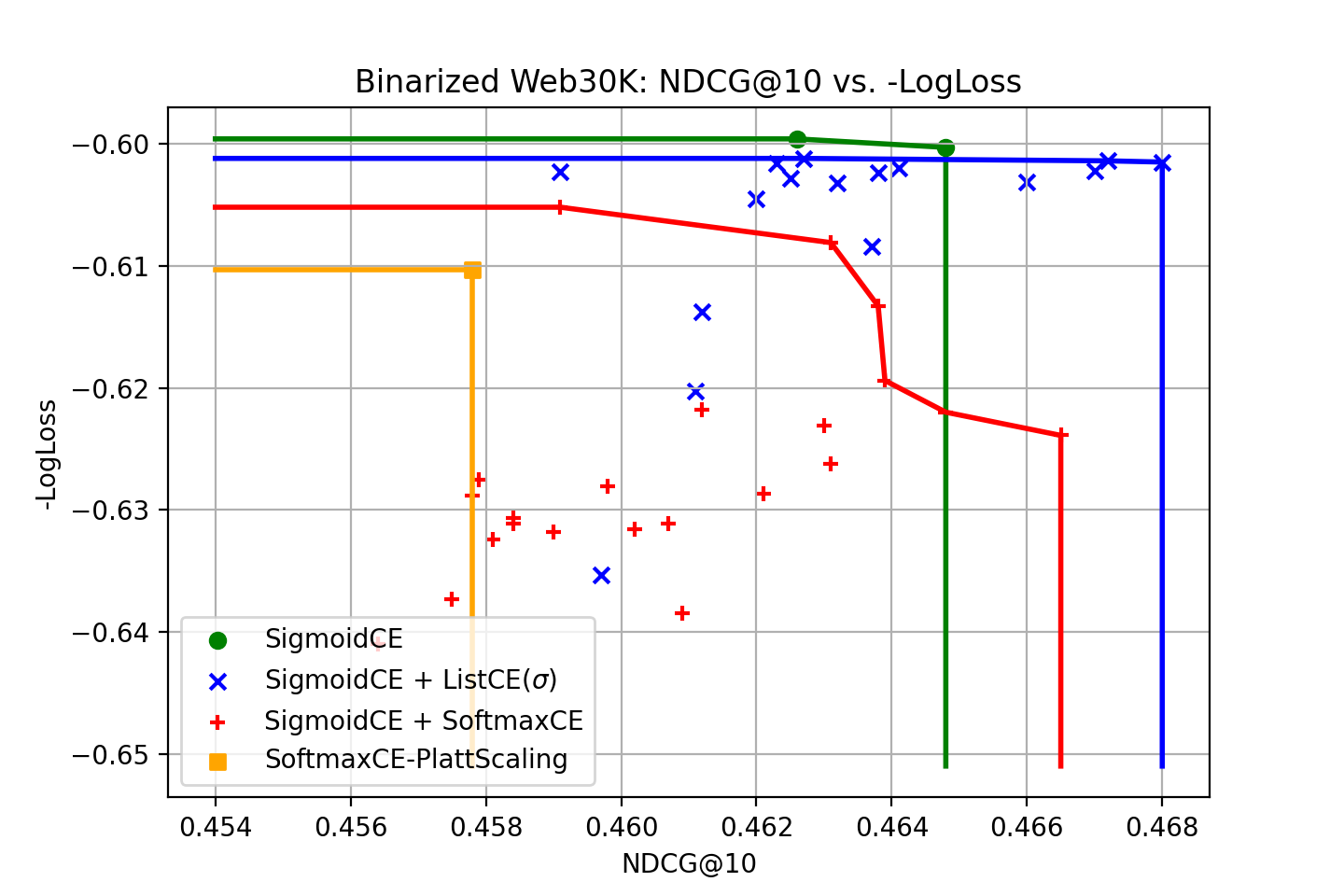}
         \caption{Binarized Web30K.}
         \label{fig:binarized_web30k}
     \end{subfigure}
     \begin{subfigure}[b]{0.33\textwidth}
         \centering
         \includegraphics[width=\textwidth]{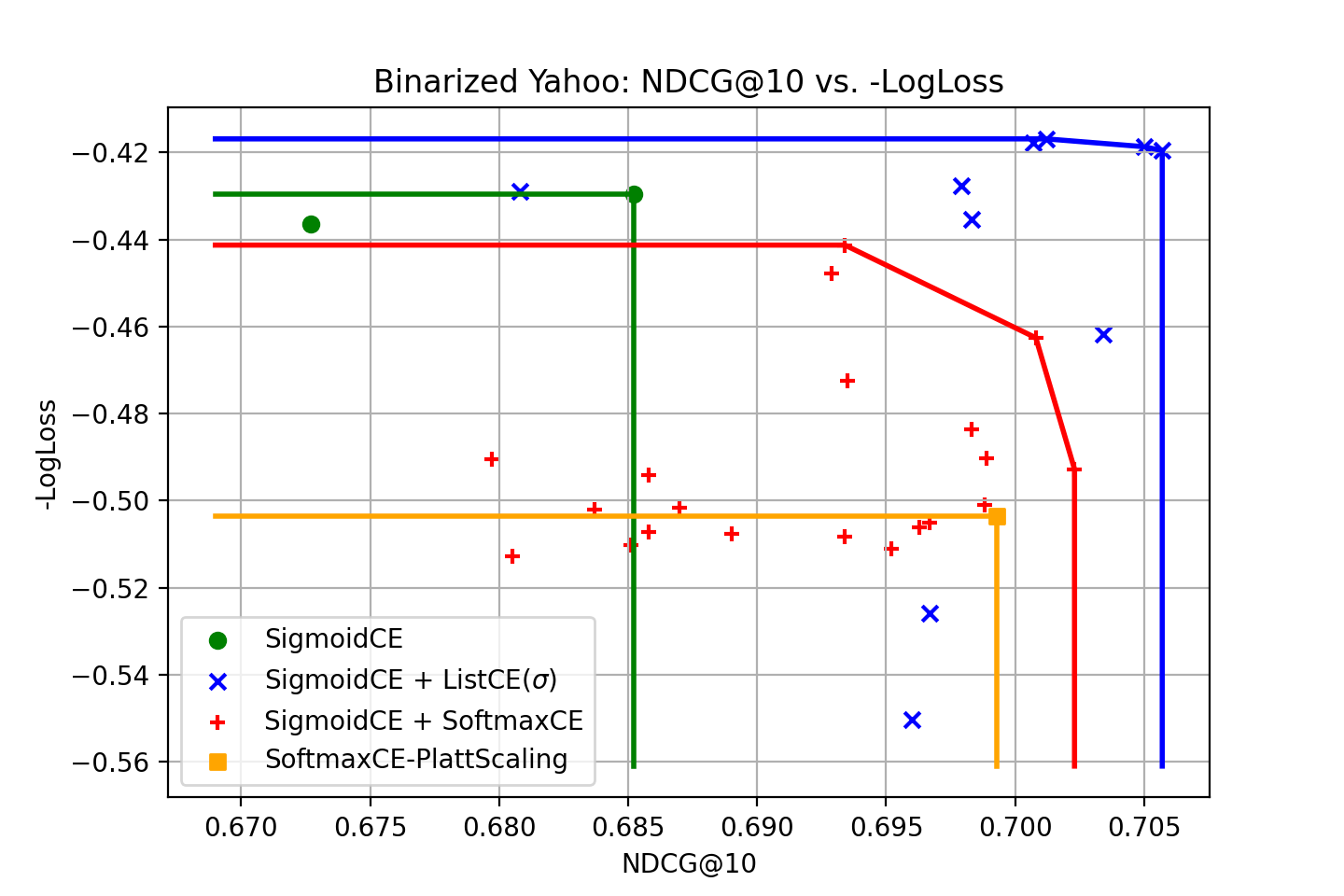}
         \caption{Binarized Yahoo.}
         \label{fig:binarized_yahoo}
     \end{subfigure}
     \begin{subfigure}[b]{0.33\textwidth}
         \centering
         \includegraphics[width=\textwidth]{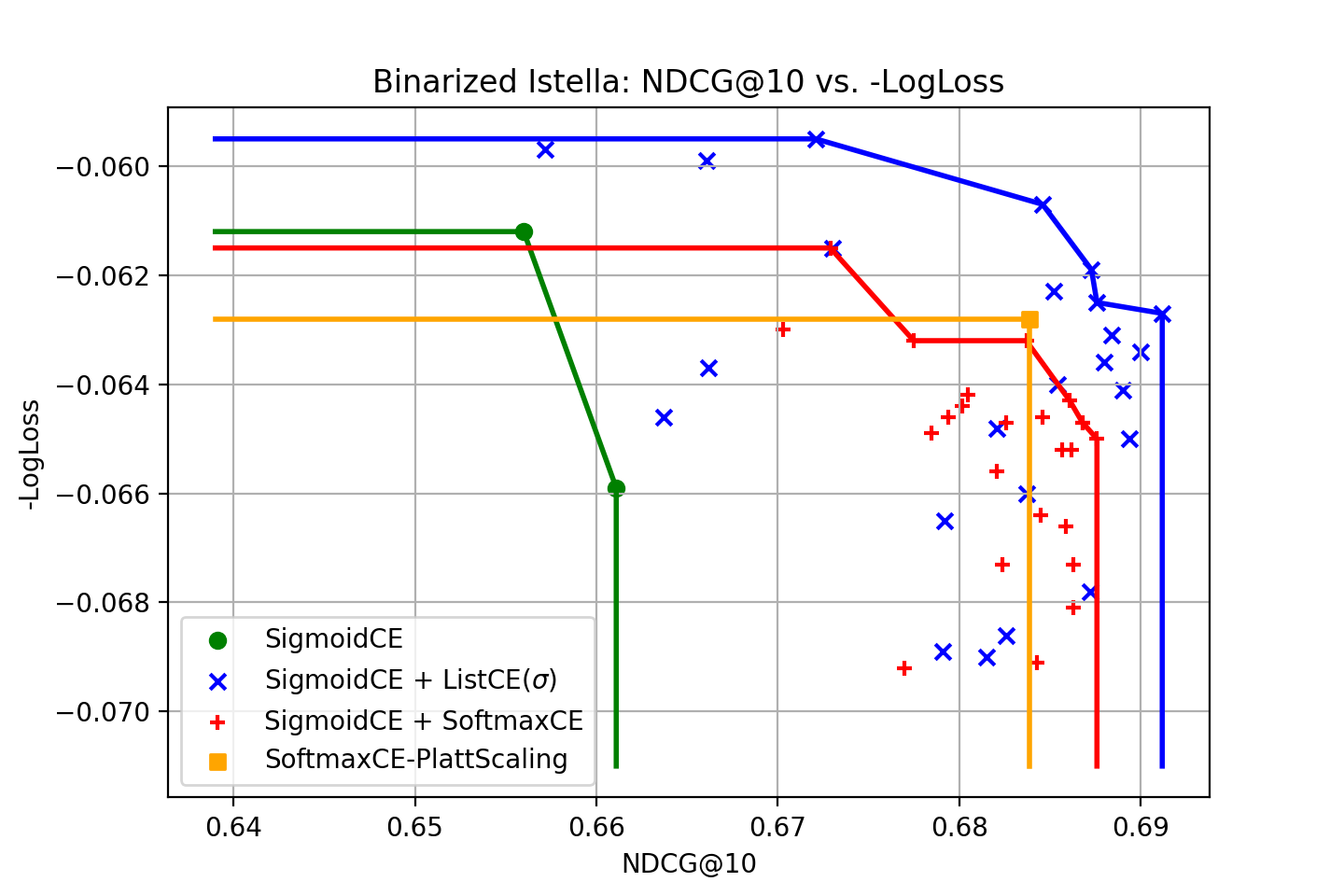}
         \caption{Binarized Istella.}
         \label{fig:binarized_istella}
     \end{subfigure}
        \caption{Pareto frontiers on the binarized Web30K, Yahoo and Istella datasets.}
        \label{fig:binarized}
\end{figure*}

The main results are shown in Table \ref{tab:binarized}. From the results, we can make the following observations:

\begin{itemize}
    \item The pointwise baselines, which are optimal in terms of logistic-regression losses by definition, consistently give either best or competitive regression performance in terms of both ECE and LogLoss; however, their ranking performance is usually inferior to other ranking-oriented methods, except in binarized Web30K where they archive better ranking performance than SoftmaxCE.
    \item The SoftmaxCE method can produce either best or competitive ranking performance; however, it is completely uncalibrated and performs poorly on regression metrics.
    \item The SoftmaxCE-Platt method performs well on regression metrics while giving the same ranking performance as SoftmaxCE. However, its regression and calibration performance is consistently inferior to the pointwise baselines.
    \item The standard multi-objective approach (SigmoidCE + SoftmaxCE) consistently achieves strong performance on both ranking and calibration metrics. It outperforms SoftmaxCE on all domains. This indicates that calibrated ranking scores can give better ranking performance than uncalibrated scores. In other words, the regression loss as a constraint in the multi-objective setting can help learning in ranking.
    \item The proposed RCR approach (SigmoidCE + ListCE($\sigma$)) consistently achieves the best ranking performance, while having comparable or better regression metrics than the pointwise baselines. This indicates that the compatible ranking and regression components within the RCR approach may mutually benefit each other and can achieve the top result on both fronts. It is also noticed that it consistently outperforms SoftmaxCE-Platt on regression metrics.
\end{itemize}

These observations indicate the proposed RCR approach is stable and performs well in terms of both regression and ranking metrics on a variety of configurations.

\subsubsection{Pareto Frontier Comparisons}
In the context of multi-objective optimization, the Pareto frontier is the set of Pareto optimal solutions where there is no scope for further Pareto improvement which is defined as a new solution where at least one objective gains, and no objectives lose. For each method, we evaluate all models over the hyper parameter space on the test data, plot each result as a regression-ranking metrics data point, and draw the Pareto frontier.

\begin{table*}[ht]
    \centering
    \caption{Comparisons with respect to relative differences in YouTube Search with SigmoidCE as the baseline.} \label{tab:youtube}
    \begin{tabular}{l||r|r||r|r|r|r}
    \hline
     & AUCPR & LogLoss &  NDCG@1 & NDCG@5 & NDCG@10 \\ \hline \hline
     Multi-Objective: $\SigmoidCE + \SoftmaxCE$ & -0.37\% & +0.13\% & \textbf{+0.30\%} & \textbf{+0.16\%} & \textbf{+0.15\%}  \\ \hline
     RCR (proposed): $\SigmoidCE + \ListCE(\sigma)$ & \textbf{+0.22\%} & \textbf{+0.03\%} & +0.27\% & +0.13\% & +0.13\%  \\ \hline
    \end{tabular}
\end{table*}

The results are shown in Figure \ref{fig:binarized}. Note that we use -LogLoss in the figures, so the Pareto frontier corresponds to the maxima of a point set. From the figures, we can see that RCR consistently dominates other methods in all domains except binarized Web30k. In binarized We30k, RCR dominates all other methods except SigmoidCE which gives better regression performance; however, its ranking performance is inferior to RCR. These results suggest that RCR can improve the Pareto frontiers or give new Pareto optimal solutions (e.g. better trade-offs) in a wide range of tasks.

\section{Experiments on YouTube Search} \label{sec:exp-yt}

We verify our approach on the real-world YouTube Search system, through both offline evaluations and large-scale online A/B testing.

\subsection{Background}
In YouTube Search, real-time user interaction data, represented as item-click pairs, is streaming to the model training infrastructure in a continuous way. Our baseline is a pCTR model that is equipped with the traditional SigmoidCE loss. Recently, new data with search page information, represented as page-item-click tuples, is made available to training, which gives the opportunity to directly improve its ranking quality within a search page. However, as stated previously, a direct switch from pointwise pCTR model to pairwise/listwise ranking model will not work in practice due to score calibration issues. It is required to improve the ranking quality of the model without affecting the values of its click predictions in any noticeable way.

\subsection{Experiments}
In this context, we compared the standard multi-objective approach ($\SigmoidCE + \SoftmaxCE$), and our RCR approach ($\SigmoidCE + \ListCE(\sigma)$) following the same setting in our baseline and train them continuously over the past $\sim$1 week of data over the same number of training steps for offline evaluation. The weight $\alpha$ is set to be 0.001 for both methods.

\paragraph{Offline Results} Offline results are reported in Table \ref{tab:youtube}. We use LogLoss and AUCPR to measure regression accuracy and NDCG for ranking quality, where AUCPR is defined as the area under the Precision-Recall curve. Higher AUCPR or lower LogLoss indicate better regression accuracy. Note that for proprietary reasons, we only report relative numbers to our baseline (SigmoidCE). From the results, we can see that the standard multi-objective approach improved the pCTR model on the NDCG ranking metric, but it caused significant degradation in both AUCPR and LogLoss metrics. Such a degradation can significantly affect the downstream stages negatively, making the models not suitable for the system. The proposed RCR approach not only improved the ranking quality, but also brought gains to pCTR predictions in terms of AUCPR. This is because in our approach the ranking component optimizes for the ranking capability directly in a way that is compatible with the regression component and is acting as a valid and aligned in-list constraint for regression -- which eventually helps the learning on regression. We also noticed that the proposed approach had a slight increase on LogLoss. This might be because the additional weight added on the ranking loss caused the learning on the regression loss to be less efficient than our baseline which is regression-only, thus the proposed approach may need more training steps for convergence.

\paragraph{Online A/B Testing} We further evaluated the model in a large-scale online A/B test over millions of users. The proposed model was tested against the production model. We report the following metrics in this experiment: {\bf SearchCTR(\%)} which is the percentage of clicks from search (the higher the better) and {\bf SearchAbandonedRate(\%)} which is the percentage of search queries that have 0 clicks and do not have refinements (the lower the better).

Due to data sensitivity, we only report relative performance of the experiment model over the production model. We observe that our proposed model improved SearchCTR by 0.66\%, and reduced SearchAbandonedRate by 0.31\% – which clearly indicates better search experiences for users. These metric gains are considered significant in the application. In comparison, the multi-objective approach doesn't qualify for production use because its negative impact on AUCPR and LogLoss. \textit{The proposed model has now been fully deployed to the YouTube Search system}.

The results from these experiments suggest that the proposed approach generalizes well to real-world production systems. The added ranking constraint not only improves ranking, but also benefits regression.

\section{Conclusion} \label{sec:conclusion}

In this paper, we propose the practical regression compatible ranking (RCR) approach for ranking tasks with binary relevance. Theoretically, we show that the regression and ranking components are mutually aligned in the sense that they share the same solution at global minima. Empirically, we show that RCR performs well on both regression and ranking metrics on several public LTR datasets, and significantly improves the Pareto frontiers in the context of multi-objective optimization. Furthermore, we show that RCR successfully improves both regression and ranking performance of a production pCTR model in YouTube Search and delivers better search experiences for users. We expect RCR to bring new opportunities for harmonious regression and ranking and to be applicable in a wide range of real-world applications where there is a list structure. In future work, we are interested in exploring more formulations for regression compatible ranking and beyond.

\bibliographystyle{ACM-Reference-Format}
\balance
\bibliography{references}

\end{document}